\documentclass[11pt]{article}
\usepackage[margin=1in]{geometry}
\usepackage{amsmath,amssymb,amsthm}
\usepackage{booktabs}
\usepackage{graphicx}
\usepackage{microtype}
\newcommand{\Rows}{225}
\newcommand{\BestBits}{14}
\newcommand{\EpochFpr}{0.02225}
\newcommand{\CountingFpr}{0.191}
\newcommand{\StableFnr}{0.19}

\newcommand{\RealRows}{45}
\newcommand{\RealEvents}{213,182}
\newcommand{\RealUnique}{2,350}
\newcommand{\RealEpochFpr}{0}
\newcommand{\RealCountingFpr}{0.0002}
\newcommand{\RealStableFnr}{0.0047}

\newcommand{\GpuEvents}{200,000,000}
\newcommand{\GpuWindow}{40,000,000}
\newcommand{\GpuQueries}{10,000,000}
\newcommand{\GpuCountingMq}{1375}
\newcommand{\GpuEpochFourMq}{310.3}
\newcommand{\GpuEpochEightMq}{183.8}
\newcommand{\GpuBlockedEightMq}{181.8}

\newcommand{\GpuBits}{14}
\newcommand{\GpuDevice}{NVIDIA B200}

\newtheorem{lemma}{Lemma}
\newtheorem{proposition}{Proposition}

\title{\vspace{-2em}\textbf{Guarded Epoch Bloom Filters for Sliding-Window Membership}}
\author{Faruk Alpay\thanks{Correspondence: \texttt{alpay@lightcap.ai}} \quad Levent Sar{\i}o\u{g}lu\\[3pt]
\normalsize Department of Computer Engineering, Bah\c{c}e\c{s}ehir University, Istanbul, Turkey\\
\normalsize \texttt{faruk.alpay@bahcesehir.edu.tr, levent.sarioglu@bahcesehir.edu.tr}}
\date{}

\begin{document}
\maketitle

\begin{abstract}
Approximate membership data structures are often used in streams whose semantics are
not ``all keys ever seen'' but ``keys seen recently.'' Counting Bloom filters support
this sliding-window semantics, but pay several bits of counter metadata per cell and
usually require a deletion queue. Stable Bloom filters keep bounded space by random
decay, but can forget live keys. This paper studies a simpler alternative: divide the
bit array into epochs, rotate the active epoch by clearing whole segments, and keep one
extra guard epoch. The guard gives a deterministic live-window invariant: every item in
the last $W$ insertions is retained, while rotation-induced stale retention is bounded by
one epoch. We give the invariant, its false-positive approximation, and a blocked variant
that confines all probes of a key to one block per epoch. The evaluation covers \Rows{}
synthetic streaming configurations and \RealRows{} configurations from a real web-server
access-log stream. At \BestBits{} bits per live item, the guarded epoch filter reduces
median synthetic false positives from \CountingFpr{} to \EpochFpr{} against a four-bit
counting Bloom baseline while keeping zero live-key false negatives. The construction is
not a replacement for exact deletion; it targets systems where bounded staleness is
acceptable and live-window false negatives are not.
\end{abstract}

\section{Introduction}
Bloom filters are the classical approximate-membership structure for replacing exact set
storage by compact state with false positives and no false negatives \cite{bloom1970}.
Approximate membership testers were later formalized as a data-structural problem with
information-theoretic space tradeoffs \cite{carter1978}. Networked systems use Bloom-style
summaries when exact membership tables are too large to ship or probe cheaply
\cite{broder2004}. Cache summaries are one early example where approximate positives are
acceptable but space and communication costs dominate \cite{fan2000}.

Many streaming uses do not ask whether a key was ever seen. They ask whether it was seen
recently. Counting Bloom filters support deletion by replacing bits with counters, but the
space budget is then divided between membership information and counter metadata. Cuckoo
filters support deletion through compact fingerprints, but they are still exact-state
updates rather than an expiration policy \cite{fan2014}. Quotient filters improve locality
and support deletion with compact metadata \cite{bender2012}. Scalable Bloom filters adapt
to unknown final cardinality, but they do not by themselves define sliding-window semantics
\cite{almeida2007}. Stable Bloom filters target unbounded streams by random decay
\cite{deng2006}. Streaming quotient filters address duplicate detection with eviction
policies over fingerprints \cite{dutta2013}.

The remaining design point is a sliding-window AMQ with deterministic no-false-negative
coverage for the last $W$ insertions, no per-key deletion queue, no counters, and a tunable
bound on stale retention. This guarantee is weaker than exact deletion and stronger than
random decay: stale positives may persist for a bounded interval, while live elements are
never intentionally evicted.

The reason to isolate this point is that expiration semantics are often coarser than
set semantics. A duplicate-suppression table, a recent-telemetry gate, or a cache-summary
refresh interval may need a hard recent-window guarantee without needing exact removal at
the first insertion after the window boundary. Existing deletion-capable AMQs solve a
stronger dynamic-set problem, while stable filters solve a probabilistic aging problem.
The object studied here is the middle case: deterministic live coverage with bounded,
observable stale coverage.

This paper asks whether a coarse expiration primitive is enough for many sliding-window
uses. The proposed \emph{guarded epoch Bloom filter} partitions the memory into
$r+1$ segments. Insertions go to the current segment. Every $\ell=\lceil W/r\rceil$
insertions the next segment is cleared and becomes current. Queries test all segments.
The extra segment is a guard: at any time, the retained interval covers the last $W$
insertions and at most one additional epoch.

\paragraph{Contributions.}
First, we give a rotating-epoch construction whose guard segment converts coarse expiration
into an exact live-window invariant. Second, we derive its live coverage, stale-retention
bound, and a closed-form false-positive approximation that exposes the role of the number of
epochs. Third, we evaluate the invariant on synthetic streams and on a real access-log
stream, and we use a B200 batch-query run only as a large-scale validation check.

\section{Model and Guarded-Epoch Construction}
Let the stream be $x_0,x_1,\ldots$ over a large universe. At time $t$, the live set is
$L_t=\{x_i:\max(0,t-W+1)\le i\le t\}$, where $W$ is the target window length. A sliding-window
approximate-membership query must answer true for every key in $L_t$; false positives are
allowed for keys outside the window. We allow bounded stale positives, meaning that a key may
remain represented for a controlled number of insertions after it leaves $L_t$.

The guarded-epoch filter uses $r+1$ independent Bloom bit arrays, called segments, under a
fixed total budget $m$ bits. Each segment has $s=\lfloor m/(r+1)\rfloor$ bits. The epoch
length is $\ell=\lceil W/r\rceil$. Insertions write only to the current segment. When the
insertion counter crosses an epoch boundary, the next segment in cyclic order is cleared and
becomes current. A query probes all $r+1$ segments and returns true if any segment contains
all $k$ positions. The implementation uses the double-hashing scheme of Kirsch and
Mitzenmacher to derive the $k$ positions from two base hashes \cite{kirsch2006}.

\begin{center}
\fbox{\begin{minipage}{0.92\textwidth}
\textbf{Algorithm 1: Guarded epoch insertion and query.}
Allocate segments $B_0,\ldots,B_r$ and set current segment $c=0$. For insertion number
$t$, if $t>0$ and $t\bmod \ell=0$, set $c\leftarrow(c+1)\bmod(r+1)$ and clear $B_c$.
Insert key $x_t$ by setting its $k$ Bloom positions in $B_c$. Query key $y$ by testing
the same $k$ positions in every segment; return true if any segment contains all of them.
\end{minipage}}
\end{center}

\begin{lemma}[Sliding-window safety]
At every insertion time, every key inserted during the last $W$ insertions is represented
in at least one uncleared segment.
\end{lemma}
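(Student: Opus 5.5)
The plan is to make the segment bookkeeping explicit and then count epochs. From Algorithm 1, insertion number $i$ goes into epoch $e(i)=\lfloor i/\ell\rfloor$ and is written to segment $B_{e(i)\bmod (r+1)}$. A segment index $j$ is cleared exactly at the start of each epoch $e$ with $e\equiv j \pmod{r+1}$ and $e>0$; epoch $0$ uses $B_0$ freshly allocated. The first step is therefore to record the invariant that, after processing insertion $t$ (current epoch $E=e(t)$), segment $B_{e\bmod(r+1)}$ contains exactly the keys of epoch $e$ for each $e\in\{\max(0,E-r),\ldots,E\}$. This holds because the next clearing of that segment index happens only at epoch $e+r+1>E$. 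The invariant follows by a short induction on $t$: each rotation clears only the segment whose previous occupant is epoch $E-(r+1)$, which is now discarded.

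Next I would take a live key $x_i$ with $\max(0,t-W+1)\le i\le t$ and show that its epoch satisfies $e(i)\ge E-r$. Since $i\ge t-W+1$ and $\ell\ge W/r$, we get $\lfloor i/\ell\rfloor \ge \lfloor (t-W+1)/\ell\rfloor$. Then
\[
\frac{t-W+1}{\ell} \;\ge\; \frac{t}{\ell}-\frac{W-1}{\ell} \;>\; \frac{t}{\ell}-r ,
\]
using $W-1<W\le r\ell$. Taking floors, and using $\lfloor a-b\rfloor\ge\lfloor a\rfloor-\lceil b\rceil$, gives $e(i)\ge \lfloor t/\ell\rfloor - r = E-r$. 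If $t<W$, the bound $e(i)\ge 0$ is trivial and still lands in the retained range. Combining this with the invariant, the $k$ positions of $x_i$ remain set in $B_{e(i)\bmod(r+1)}$. Bits are never unset except by whole-segment clears, and other insertions only add bits. So $x_i$ is represented in an uncleared segment, and the query for $x_i$ returns true.

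The main obstacle is the off-by-one bookkeeping: the exact moment of clearing relative to the insertion at a boundary time $t$ with $t\bmod\ell=0$, and the ceiling in $\ell=\lceil W/r\rceil$. I would check the tight case directly, where $t$ is the first insertion of a new epoch and $i=t-W+1$. There $E-e(i)\le r$ must hold, which follows from $t-i=W-1\le r\ell-1$. I would also confirm that the clear at time $t$ removes epoch $E-r-1$, which ends at index $(E-r)\ell-1 \le t-r\ell-1 < t-W+1$, so it contains no live key. That inequality is precisely where the guard segment (the "$+1$") is needed.
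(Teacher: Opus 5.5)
Your proof is correct and uses the same idea as the paper's. The $r+1$ segments retain the current partial epoch plus the $r$ preceding full epochs, and since $r\ell\ge W$ the last $W$ insertions fall inside that retained span; you make the paper's one-line argument precise by indexing epochs as $\lfloor i/\ell\rfloor$ and checking the boundary case explicitly (your floor manipulation could be shortened to $i\ge t-W+1\ge E\ell-r\ell+1>(E-r)\ell$).
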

\begin{proof}
The structure keeps the current partial epoch and $r$ older epochs because clearing a
segment occurs only when it is selected as the new current segment. Thus the retained
history length is at least $r\ell\ge W$. A key from the last $W$ insertions was written
within that retained interval, so its segment has not been cleared.
\end{proof}

\begin{proposition}[Bounded staleness]
No key older than $W+\ell$ insertions can remain solely because of epoch rotation.
\end{proposition}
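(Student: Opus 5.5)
We formalize the retained interval precisely and show that its length never exceeds $W+\ell$. Index epochs by $e=\lfloor t/\ell\rfloor$. By Algorithm 1, insertion $t$ writes to segment $B_{e \bmod (r+1)}$. Segment $B_c$ is cleared exactly when a new epoch $e'$ with $e'\equiv c \pmod{r+1}$ begins. So at insertion time $t$, in epoch $e_t=\lfloor t/\ell\rfloor$, the uncleared contents of the $r+1$ segments are exactly the insertions from epochs $e_t-r,\ldots,e_t$. This is the same bookkeeping used in the proof of Lemma~1 (Sliding-window safety), now read as an upper bound instead of a lower bound. I will prove this characterization by induction on $t$. The base case has all segments empty and $c=0$. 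In the inductive step, an epoch boundary clears $B_{e_t \bmod (r+1)}$, which previously held epoch $e_t-(r+1)$ and nothing else.

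Next, bound the age of the oldest retained insertion. Epoch $e_t-r$ begins at insertion index $(e_t-r)\ell$, and $t<(e_t+1)\ell$. Hence any retained key $x_i$ satisfies
\[
t-i \;\le\; (e_t+1)\ell-1-(e_t-r)\ell \;=\; (r+1)\ell-1 \;=\; r\ell+\ell-1 .
\]
Since $\ell=\lceil W/r\rceil$, we have $r\ell\le W+r-1$. So the retention span is $r\ell+\ell$, which is $W+\ell$ up to the ceiling slack of at most $r-1$. When $r\mid W$ the bound is exactly $t-i\le W+\ell-1$, so no key whose age exceeds $W+\ell$ remains in any segment.

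The main obstacle is this rounding slack when $r\nmid W$: $(r+1)\ell$ can exceed $W+\ell$ by up to $r-1$ insertions. I would handle it in one of two ways. The first is to interpret ``older than $W+\ell$'' as age measured in the paper's $r\ell\ge W$ convention, i.e.\ stale retention is at most one epoch beyond the $r\ell$-insertion retained history of Lemma~1. The second is to state the divisible case explicitly and note that in general the stale excess is at most $\ell+(r\ell-W)<2\ell$.

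Finally, I would note what ``solely because of epoch rotation'' excludes. A key whose own bits were cleared can still be reported as present through a false-positive collision with bits set by other live keys. That event is covered by the false-positive approximation, not by this proposition. The argument therefore concerns only the key's own writes, which by the characterization above reside only in segments for epochs $e_t-r,\ldots,e_t$.
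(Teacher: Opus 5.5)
Your proof follows the same route as the paper's: the segment holding a key's epoch is cleared at the $(r+1)$-st subsequent epoch advance, so at most $(r+1)\ell=r\ell+\ell$ insertions are retained. Your explicit epoch bookkeeping exposes a step the paper's proof skips. The paper bounds retention by one epoch beyond the $r\ell$ coverage and then writes this as $W+\ell$. That would need $r\ell\le W$, but Lemma~1 only gives $r\ell\ge W$.

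Your caveat is therefore a genuine defect of the statement, not of your argument. For instance, take $W=5$, $r=4$, $\ell=2$. At $t=9$, segment $B_0$ still holds $x_0$, whose age is $9>W+\ell=7$. So the literal claim holds only when $r\mid W$, which is true for every parameter choice in the experiments, or under your $r\ell$ reinterpretation.

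One correction to your second remedy. The worst-case stale excess is $(r+1)\ell-W=\ell+(r\ell-W)$. It is below $2\ell$ only when $(r-1)\ell<W$, which can fail when $W$ is small relative to $r^2$. In the same example it equals $5>2\ell=4$. The bound that holds in general is $\ell+(r\ell-W)\le\ell+r-1$.
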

\begin{proof}
After $r+1$ epoch advances, the segment containing the key is selected and cleared. Since
the guard keeps at most one extra epoch beyond the $r\ell$ live coverage, rotation-induced
retention is bounded by $W+\ell$ insertions.
\end{proof}

The blocked version uses the first hash to choose a block in each segment and the remaining
hashes to choose offsets inside that block. This changes constants and locality, not the
guard invariant.

\begin{proposition}[False-positive approximation]
\label{prop:fpr}
Assume fully uniform hashing and a stream with no repeated queried negative key. If the
last $r+1$ retained epochs contain at most $(r+1)\ell$ insertions, then the guarded-epoch
false-positive probability is approximated by
\[
  p_{\mathrm{GE}}(m,W,r,k)
  \approx
  1-\left(1-\left(1-e^{-k\ell/s}\right)^k\right)^{r+1},
  \qquad s=\lfloor m/(r+1)\rfloor .
\]
\end{proposition}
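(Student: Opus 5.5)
The plan is to reduce the guarded-epoch filter to $r+1$ ordinary Bloom filters and then combine their false-positive events, using a standard Poisson-style approximation for each one.

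\textbf{Step 1: per-segment fill.} Fix a query key $y$ that is not in any retained epoch. By Algorithm 1, each segment $B_j$ holds only the insertions of a single epoch written since it was last cleared. By hypothesis that is at most $\ell$ insertions, so we bound it by $\ell$ (the current partial epoch has fewer, which only lowers the fill). Under fully uniform hashing, the $k$ positions of each inserted key are independent and uniform over the $s$ cells of $B_j$. Hence a fixed cell stays zero with probability $(1-1/s)^{k\ell}\le e^{-k\ell/s}$. The expected fraction of set bits is therefore about $\rho=1-e^{-k\ell/s}$, the usual approximation for a Bloom filter with $s$ bits, $\ell$ keys and $k$ hashes.

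\textbf{Step 2: per-segment false positive.} The $k$ probe positions of $y$ in $B_j$ are uniform and independent of the stored keys. The hypothesis that negative query keys are not repeated rules out correlation across queries with an earlier answer. Applying the standard Bloom approximation (conditioning on the fill fraction concentrating at $\rho$, as in Bloom and in Kirsch--Mitzenmacher for double hashing), $y$ appears in $B_j$ with probability $\approx\rho^k=(1-e^{-k\ell/s})^k$.

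\textbf{Step 3: combine segments.} A query returns true iff some segment reports all $k$ bits. The segments use disjoint bit arrays and disjoint sets of inserted keys, and $y$'s probes in different segments are independent under fully uniform hashing. So the $r+1$ per-segment events are approximately independent, and
\[
\Pr[\text{no false positive}]\approx\bigl(1-(1-e^{-k\ell/s})^k\bigr)^{r+1}.
\]
Taking complements gives $p_{\mathrm{GE}}$. If the implementation reuses the same two base hashes for all segments, the independence claim weakens to an approximation. I would note that explicitly and defer to the fully-uniform assumption in the statement.

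\textbf{Main obstacle.} The delicate step is justifying the two approximations rigorously rather than heuristically. These are the replacement of the random fill fraction by its mean, which needs a concentration bound of Azuma or Poisson-approximation type, and the independence across segments. I would handle the first by citing the standard Bloom-filter concentration argument per segment. Independence of the segments then follows directly from disjoint randomness, so the only loss is the per-segment mean-field step. Monotonicity in the number of insertions justifies using $\ell$ as an upper bound for every segment.
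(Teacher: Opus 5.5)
Your proposal is correct and matches the paper's proof. Both cap each segment at $\ell$ insertions, apply the standard Bloom estimate $(1-e^{-k\ell/s})^k$ per segment, and treat the $r+1$ segment events as independent before taking the complement. The one overstatement is your claim that independence ``follows directly from disjoint randomness.'' Algorithm~1 probes the same $k$ positions in every segment, and keys can recur across epochs, so the segment events are independent only conditionally on $y$'s positions and only when the epochs' key sets are disjoint. That is why the paper, like you in the end, presents this step as an approximation.
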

\begin{proof}
One segment receives at most $\ell$ insertions before it stops being current. Under uniform
hashing, the probability that a particular bit in that segment remains zero is approximately
$e^{-k\ell/s}$, so the probability that all $k$ query positions are one in that segment is
$(1-e^{-k\ell/s})^k$. A query is a false positive if at least one retained segment passes the
Bloom test. Treating segment events as independent gives the stated expression.
\end{proof}

The expression separates the two roles of $r$. Increasing $r$ shortens each epoch and tightens
stale retention, but it also increases the number of queried segments and reduces the bits per
segment. The experiments evaluate this design tension without fitting parameters to the
formula.

\section{Experiments}
All experiments are generated by the artifact in this directory. The synthetic sweep uses
120,000 insertions, a live window of 20,000 insertions, three seeded workloads
(uniform, Zipf-like, and bursty), five memory budgets, and three seeds. Each synthetic row
evaluates 20,000 live positive queries, 20,000 fresh negative queries, and 20,000 expired-key
queries. Expired-key queries are sampled from the previous window after removing keys that
still occur in the current live window. The synthetic output has \Rows{} rows.

The non-synthetic stream is the Organization X Apache access log from a public web-server
log dataset on Zenodo \cite{zenodoWebLog2026}. The artifact parses \RealEvents{} HTTP access
records, sorts them by timestamp, and maps each client-IP, method, and URL-path triple to a
stable 64-bit key. This stream has \RealUnique{} distinct request signatures under that
keying rule. The real-log sweep uses the same 20,000-item live window, the same query counts,
three query-sampling seeds, and the same 8, 12, and 14 bits-per-item budgets.

We compare four structures under the same total memory budget in bits: a four-bit counting
Bloom filter with exact window deletion, a stable Bloom filter with random counter decay,
the guarded epoch Bloom filter, and a blocked guarded epoch Bloom filter. The main metrics
are false-positive rate on fresh negatives, false-negative rate on live keys, positive rate
on expired keys, and Python/NumPy query throughput. The B200 run is separate from the main
comparison and is used only to stress the same invariant at a larger batch scale.

\section{Results}
Table~\ref{tab:results} summarizes the largest memory budget for both the synthetic streams
and the real access-log stream. Under the same total bit budget, four-bit counters leave the
counting filter with fewer addressable cells. Guarded epoch filters spend that memory on bits
and use coarse segment expiration instead. On the synthetic sweep, the $r=8$ guarded filter
reduces median false-positive rate at \BestBits{} bits per live item from \CountingFpr{} to
\EpochFpr{} while preserving zero measured false negatives on live keys. On the real log
stream, the same filter gives measured false-positive rate \RealEpochFpr{} against
\RealCountingFpr{} for the counting baseline. The stable Bloom filter illustrates the cost of
random decay: its median live-key false-negative rate is \StableFnr{} on synthetic streams
and \RealStableFnr{} on the real log stream.

\refstepcounter{table}\label{tab:results}
\noindent\begin{minipage}{\textwidth}
\small\textbf{Table \thetable.} Median results at \BestBits{} bits per live-window item. Synthetic
rows aggregate uniform, Zipf-like, and bursty streams over three seeds. Real-log rows
aggregate three query-sampling seeds over the timestamp-ordered Organization X access log.
FPR is false-positive rate on fresh negatives; live FNR is false-negative rate on keys in the
window; expired pos. is the positive rate on keys sampled from the previous window after
excluding keys still present in the live window.
\end{minipage}

\vspace{0.4em}
\begin{center}
\small
\resizebox{\textwidth}{!}{\begin{tabular}{llrrrr}
\toprule
Corpus & Structure & FPR & live FNR & expired pos. & CPU Mq/s\\
\midrule
Synthetic streams & Counting BF (4-bit) & 0.1910 & 0.0000 & 0.1882 & 46.5\\
 & Stable BF & 0.1505 & 0.1900 & 0.1638 & 47.1\\
 & Guarded epoch BF ($r=4$) & 0.0227 & 0.0000 & 0.2680 & 9.02\\
 & Guarded epoch BF ($r=8$) & 0.0222 & 0.0000 & 0.1469 & 5.24\\
 & Blocked guarded epoch BF ($r=8$) & 0.0708 & 0.0000 & 0.1858 & 5.00\\
\addlinespace
Real web logs & Counting BF (4-bit) & 0.0002 & 0.0000 & 0.0000 & 53.2\\
 & Stable BF & 0.0001 & 0.0047 & 0.0031 & 53.7\\
 & Guarded epoch BF ($r=4$) & 0.0000 & 0.0000 & 0.0679 & 13.5\\
 & Guarded epoch BF ($r=8$) & 0.0000 & 0.0000 & 0.0098 & 7.93\\
 & Blocked guarded epoch BF ($r=8$) & 0.0002 & 0.0000 & 0.0098 & 7.04\\
\bottomrule
\end{tabular}
}
\end{center}

The table also shows the cost of the guard. Counting Bloom filters delete expired keys
exactly. Guarded epoch filters may retain keys from the guard interval and can also answer
true through ordinary Bloom collisions; this is visible in the expired-key positive rate.
The construction therefore replaces exact deletion metadata with a bounded stale interval.
The remote CPU validation reproduces the same accuracy ordering. The B200 scale run uses
\GpuEvents{} insertions, a \GpuWindow{}-item live window, and \GpuQueries{} live plus
\GpuQueries{} negative queries at \GpuBits{} bits per item. On \GpuDevice{}, the counting
baseline reaches \GpuCountingMq{} Mq/s, the guarded $r=4$ filter reaches \GpuEpochFourMq{}
Mq/s, the guarded $r=8$ filter reaches \GpuEpochEightMq{} Mq/s, and the blocked $r=8$
filter reaches \GpuBlockedEightMq{} Mq/s. These numbers validate the batched multi-segment probe at scale;
they are not presented as a separate GPU data-structure contribution.

\section{Discussion}
The guarded epoch filter should not be used where exact expiration is required. Its stale
positives are bounded but real, so a counting Bloom filter, cuckoo filter, quotient filter,
or exact set is the right primitive when deletion semantics must match the window boundary
precisely. The intended use case is an application that already tolerates expiration
granularity, such as duplicate suppression over recent telemetry or cache-summary refresh
intervals. In return, the design removes the deletion queue and per-key counters and reduces
expiration to a single segment clear at an epoch boundary.

The comparison is intentionally conservative in one respect and favorable in another.
Counting Bloom filters can be engineered with compact counters and tuned hash counts, while
epoch filters pay $r+1$ segment probes per query. This query cost is the main operational
cost of the invariant. Conversely, the experiment charges the same total memory to all
structures; systems that already store an exact deletion queue for other reasons may value
the counting filter differently. The evaluation also remains limited: it combines controlled
synthetic streams with one real web-log family, not a broad trace corpus.

\section{Reproducibility}
The artifact contains the filter implementations, tests, experiment runner, table script,
and this manuscript. Running \texttt{make test}, \texttt{make experiments},
\texttt{make real}, \texttt{make tables}, and \texttt{make paper} regenerates the local
evidence used here. The generated CSV records workload or corpus, seed, memory budget,
filter parameters, false-positive rate, false-negative rate, stale-positive rate, and timing
columns for every configuration. The real-log script downloads the Organization X archive
from Zenodo if it is not already present locally. A second CPU validation sweep was run on
the remote Vast Linux instance with 240,000 insertions and a 40,000-item window; its results
are stored in \texttt{runs/remote\_validation.csv}. The B200 scale run is stored in
\texttt{runs/gpu\_b200\_scale.csv}.

\section{Conclusion}
Guarded epoch Bloom filters provide deterministic live-window coverage using segment
rotation rather than per-key deletion. Compared with counting Bloom filters, they allocate
the full budget to membership bits instead of counters. Compared with stable Bloom filters,
they do not randomly evict live elements. The cost is a bounded stale interval of one epoch
and $r+1$ segment probes per query. The experiments show that this invariant reduces false
positives under a fixed memory budget while preserving the no-false-negative property for
recent keys on both controlled streams and a real access-log stream.

\bibliographystyle{plain}
\bibliography{refs}
\end{document}